\newtheorem{theorem}{Định lí}
\def\BibTeX{{\rm B\kern-.05em{\sc i\kern-.025em b}\kern-.08em
    T\kern-.1667em\lower.7ex\hbox{E}\kern-.125emX}}
\begin{document}

\title{
Energy efficient deployment solutions in high density heterogeneous networks}

\author{

\IEEEauthorblockN{
Nguyễn Doãn Hiếu* \IEEEauthorrefmark{2},
Đào Lê Thu Thảo \IEEEauthorrefmark{2},
Trần Mạnh Hoàng \IEEEauthorrefmark{2}
} 
\IEEEauthorblockA{\IEEEauthorrefmark{2}\textit{Viện Điện tử - Viễn thông, Trường Đại học Bách khoa Hà Nội}\\
Email: \{hieu.nd172526\}@sis.hust.edu.vn; \{thao.daolethu, hoang.tranmanh\}@hust.edu.vn}
}

\maketitle

\begin{abstract} 

This study deals with the problem of optimizing transmit power in high density heterogeneous networks. In the communication network, effective methods of allocating transmit power, in order to reduce the total transmit power, but still ensure the quality of service of the user equipment, is a big challenge. number of power consumption optimization problems in core station links, with the goal of maximizing network energy efficiency while ensuring user experience. To solve this non-convex optimization problem, the authors first propose some iterative algorithms to find the convergence point such as the "descent" method, the "Lagrange" method. Then, the authors evaluate the convergence point of each method as well as consider the complexity of each algorithm when put into application. Finally, the simulation results will show the convergence value and compare the performance with the technologies being used today to confirm the effectiveness of the proposed algorithm.
\end{abstract}
\def\IEEEkeywordsname{Từ khóa}
\begin{IEEEkeywords}
 Tối ưu lưu lượng, Kiểm soát điện năng, Quản lý nhiễu, Tiết kiệm năng lượng, Mạng không đồng nhất.
\end{IEEEkeywords}

\section{Giới thiệu}

Hệ thống thông tin di động không dây 5G dự báo lưu lượng truy cập cũng như số lượng thiết bị di động tăng cực mạnh. Mạng không đồng nhất (HetNets) được xem là giải pháp tốt nhất có thể đáp ứng nhu cầu của người dùng bằng cách sử dụng nhiều lớp mạng gồm macro/pico/femtocell để phủ sóng tốt hơn cũng như nâng cao hiệu suất phổ (SE) và hiệu suất năng lượng (EE) \cite{Huq2014}. 

Nhu cầu sử dụng các dịch vụ truyền thông ngày càng lớn, đo đó để đáp ứng nhu cầu này, hệ thống mạng phải triển khai một số lượng lớn các cell cỡ nhỏ (small cells) trong các mạng di động thông thường và gây ra tình trạng mạng không đồng nhất – mật độ cao (Ultra-dense heterogeneous networks HetNets) \cite{Kamel2016}. Bằng cách thêm các loại trạm phát công suất thấp (Base Station - BS) khác nhau, ví dụ: pico, femto và các trạm chuyển tiếp, HetNets có thể mang lại hiệu suất tăng đáng kể, chẳng hạn như hiệu suất quang phổ, hiệu suất công suất và phạm vi phủ sóng đầy đủ \cite{Thanh2021}. Để cải thiện hơn nữa hiệu suất mạng, người dùng di động (User Equipment - UE) thường kết nối với BS tế bào nhỏ, thậm chí BS cỡ lớn có thể cung cấp tín hiệu di động mạnh nhất, tức là giảm tải hoặc cân bằng tả \cite{Nguyen2021}. Một sơ đồ tối ưu kênh truyền, phù hợp giúp giảm tải trên các cell cỡ lớn (macro cell), giảm công suất phát trung bình và cải thiện tốc độ dữ liệu dài hạn của các UE do khoảng cách giữa các UE và BS nhỏ hơn \cite{Tam2017}. Tuy nhiên, một trong những vấn đề chính trong HetNets mật độ cao là quản lý nhiễu, điều này có thể làm giảm hiệu quả của việc triển khai năng lượng. Do đó, cần phải xem xét về việc điều khiển công suất và mô hình quản lý nhiễu (Traffic offloading and interference management - TOIM) để có được lợi ích cao hơn.

Trong \cite{Sun2015}, các tác giả trình bày một phân tích về độ phức tạp và hiệu quả của thuật toán được nghiên cứu trong bài toán công bằng tối thiểu của mô hình quản lý nhiễu. Các tác giả trong \cite{Quan2019} đã xem xét hai loại mục tiêu, mục tiêu thứ nhất là tỷ lệ hiệu dụng tổng mạng và mục tiêu thứ hai là tỷ lệ hiệu dụng tối thiểu, đồng thời phát triển hai thuật toán tương ứng bằng cách sử dụng phương pháp xấp xỉ lồi liên tiếp và suy giảm xen kẽ (Successive convex approximation). Tuy nhiên, lưu lượng dữ liệu ngày càng phát triển, gây ra sự gia tăng đáng kể mức tiêu thụ năng lượng, đặc biệt là trong các HetNets siêu dày đặc, nơi một số lượng lớn các cell cỡ nhỏ được triển khai để phục vụ một số lượng lớn các UE. Điều này luôn thúc đẩy nhóm tác giả làm việc theo hướng nâng cao hiệu quả sử dụng năng lượng mạng. Trong \cite{Agapi2014}, tác giả giới thiệu một phương pháp liên kết thiết bị người dùng, có nhận biết thông tin ngoại cảnh, được lập trình để nâng cao hiệu suất năng lượng cho HetNets. Phương pháp được đề xuất hoạt động theo phương thức khai thác khả năng nhận thức của các phần tử mạng, tức là HetNets có nhận thức; tuy nhiên, bài báo không xem xét các vấn đề về nhiễu và công suất tiêu thụ của các liên kết trạm lõi - liên kết backhaul. Các tác giả trong \cite{Wang2017} đã nghiên cứu sơ đồ mô hình quản lý nhiễu tiết kiệm năng lượng trong tuyến lên của HetNest, nơi một UE có thể kết nối với macro BS ở chế độ trực tiếp hoặc chế độ chuyển tiếp. 

Dựa trên các quan sát trên, tác giả xem xét một vấn đề chung về nâng cao hiệu suất năng lượng tuyến xuống của HetNets mật độ cao, trong đó cả công suất tiêu thụ của mạng truy cập và liên kết trạm lõi đều được xem xét trong mục tiêu của nâng cao hiệu suất năng lượng. Giải pháp tối ưu cho vấn đề đã được giải quyết bằng cách sử dụng chung quy trình Dinkelbach’s và một số phương pháp tối ưu cho bài toán tối ưu lồi. Cuối cùng, tác giả lựa chọn phương pháp suy giảm luân phiên sau khi đã cân nhắc về hiệu suất công suất mạng cũng như độ phức tạp, chi phí hệ thống vận hành. Kết quả mô phỏng số được cung cấp để đánh giá hiệu suất của thuật toán được đề xuất so với các phương thức quản lý mạng hiện có.

Cấu trúc bài báo sẽ gồm 4 phần: trong phần \ref{Sec:ProblemFomulation}, nhóm tác giả  mô tả bài toán truyền tin tuyến xuống trong mạng HetNets. Trong phần \ref{Sec:ProposedAlgorithm}, các tác giả đề xuất một số thuật toán giải quyết vấn đề đã đặt ra. Cuối cùng, kết quả được thu thập  và đánh giá trong phần \ref{Sec:SimulationResults}

\section{Mô hình hệ thống và đặt vấn đề}
\label{Sec:ProblemFomulation}

Trong nghiên cứu này, nhóm tác giả xem xét một mô hình hệ thống mạng không đồng nhất như Hình \ref{Fig:HetNet}. Trong mạng truyền thông không đồng nhất HetNet 2 tầng này gồm 1 cell cỡ lớn và $K-1$ cell cỡ nhỏ. Để đạt được hiệu suất tái sử dụng phổ cao, các cell cỡ nhỏ được kích hoạt để tái sử dụng toàn bộ phổ của cell vi bào. Ký hiệu $\mathcal{K}=\left\lbrace 1,...,K \right\rbrace$ là tập hợp các trạm phát trong đó chỉ số của trạm phát cell cỡ lớn (Macrocell base station -MBS) là $1$, kí hiệu $\mathcal{N}=\left\lbrace 1,...,N \right\rbrace$ là tập hợp thiết bị người dùng. 
\begin{figure} [ht]
    \centering
    \includegraphics[scale=0.23]{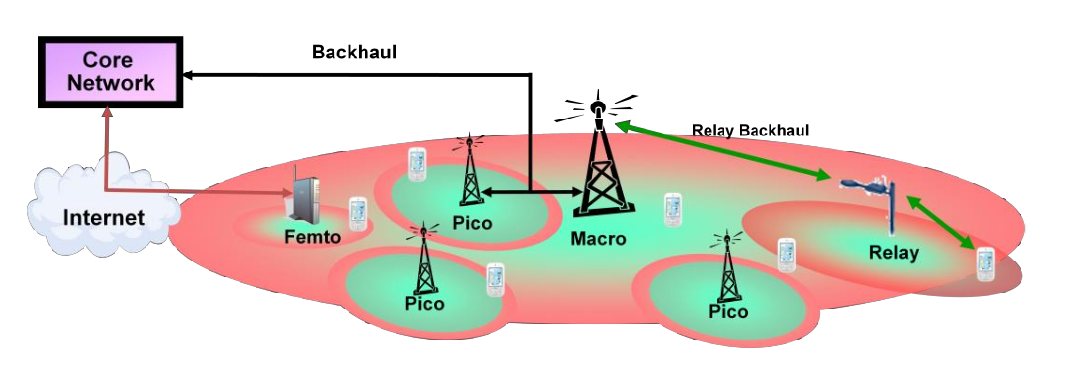}
    \caption{Mô hình một mạng không đồng nhất}
    \label{Fig:HetNet}
\end{figure}
Tỉ số tín hiệu trên tạp âm và nhiễu (SINR) của UE thứ $n$, đang kết nối tới BS thứ $k$, là 
\begin{align*}
    \gamma_{nk}(\boldsymbol{p}) = \frac{g_{nk}p_{k}}{\left(n_{0} + \sum\nolimits_{j \neq k, j \in \mathcal{K}}g_{nj}p_{j}\right)}.
\end{align*} 

Trong đó tỉ lệ tín hiệu người dùng thứ $n$ thu được là $r_{nk}(\boldsymbol{p}) = \log(1 + \gamma_{nk}(\boldsymbol{p}))$. Với $p_{k}$ là công suất truyền tín hiệu của BS thứ $k$, $g_{nk}$ là tăng ích kênh truyền giữa BS $k$ và UE $n$, và $n_{0}$ là công suất nhiễu nền tại UE thứ $n$. Kí hiệu $x_{nk}$ là biến kết hợp của UE thứ $n$, cụ thể, $x_{nk}=1$ thể hiện UE thứ $n$ kết nội với BS thứ $k$, còn $x_{nk}=0$ thì ngược lại. Tổng số lượng UE kết nối với BS thứ $k$ là $y_{k}=\sum\nolimits_{n \in \mathcal{N}}x_{nk}$. Tiếp theo, tỉ lệ tín hiệu hiệu dụng của UE thứ $n$ được định nghĩa $R_{nk}(\boldsymbol{p},\boldsymbol{x}_{k})=r_{nk}(\boldsymbol{p})/y_{k}$ \cite{Tam2017}, trong đó $\boldsymbol{x}_{k}=[x_{1k},...,x_{Nk}]^{T}$ và $\boldsymbol{x}=[\boldsymbol{x}_{1};...;\boldsymbol{x}_{K}]$. Khái niệm về tỷ lệ tín hiệu hiệu dụng có thể đạt được của một UE tương tự như tỷ lệ có ích và tỉ lệ có ích hiệu dụng trong các khuôn khổ kiểm soát tắc nghẽn xuyên lớp \cite{QVP2015}. Dựa trên tỷ lệ hiệu dụng của mỗi UE, tổng tỷ lệ hiệu dụng của mạng là
\begin{equation}
    R(\boldsymbol{x},\boldsymbol{p}) = \sum\nolimits_{k \in \mathcal{K}}\sum\nolimits_{n \in \mathcal{N}}x_{nk}R_{nk}(\boldsymbol{p},\boldsymbol{x}_{k}), \nonumber
\end{equation}
trong đó $x_{nk}$ thể hiện tỉ lệ hữu ích của người dùng thứ $n$ thu được nếu người dùng thứ n kết hợp với BS thứ $k$.

Tổng công suất tiêu hao là tổng công suất trong mạng kết nối và trong liên kết trạm lõi \cite{Agapi2014}, với
\begin{equation}
    P = P_{\rm an} + P_{\rm bh} \nonumber,
\end{equation}
trong đó $P_{\rm an}$ là công suất kết nối giữa các anten, được biểu diễn như sau:
\begin{equation}
    P_{\rm an} = \sum\nolimits_{k \in \mathcal{K}}\varrho_{k}p_{k} + P_{c} \nonumber,
\end{equation}
và $P_{\rm bh}$ là công suất của các liên kết trạm lõi. $\varrho_{k}$ là nghịch đảo của hiệu suất tiêu hao của bộ khuếch đại công suất tại BS thứ $k$ và $P_{c}$ công suất tiêu thụ cố định. Công suất tiêu thụ trong một liên kết trạm lõi - liên kết backhaul - thường tỷ lệ với tổng dữ liệu được mang bởi liên kết đó $P_{\rm bh} = \sum\nolimits_{k \in \mathcal{K}}(\xi_{k}/y_{k})\sum\nolimits_{n \in \mathcal{N}}x_{nk}r_{nk}$, trong đó $\xi_{k}$ tính mức tiêu thụ năng lượng động trên mỗi đơn vị dữ liệu \cite{isheden2010}.

Nghiên cứu nhắm tới việc tìm ra liên kết người dùng tối ưu và phân bổ công suất để  nâng cao hiệu suất năng lượng của mạng, được tính bằng tỉ lệ $EE = R/P$. Vấn đề tối ưu hóa có thể được lập công thức toán học như
\begin{align}
& \max_{\boldsymbol{x},\boldsymbol{p}} \left[ EE = \frac{\sum\limits_{k \in \mathcal{K}}\sum\limits_{n \in \mathcal{N}}x_{nk}r_{nk}/y_{k}}{\sum\nolimits_{k \in \mathcal{K}}\varrho_{k}p_{k} + P_{c} + \sum\limits_{k \in \mathcal{K}}(\xi_{k}/y_{k})\sum\limits_{n \in \mathcal{N}}x_{nk}r_{nk}}  \right] \label{Eq:OptProblem}\\
& \text{s.t.} 
\quad (\text{C1}): \sum\limits_{k \in \mathcal{K}}x_{nk} = 1, \; \forall n \in \mathcal{N}	\label{Eq:OptProblem} \nonumber \\
& 
\quad\quad\: (\text{C2}): y_{k} = \sum\limits_{n \in \mathcal{N}}x_{nk} \geq 1, \; \forall k \in \mathcal{K} \nonumber \\
&
\quad\quad\: (\text{C3}): x_{nk} = \left\lbrace 0,1 \right\rbrace, \; \forall n \in \mathcal{N}, \; \forall k \in \mathcal{K} \nonumber \\
&
\quad\quad\: (C4): 0 \leq p_{k} \leq p_{k}^{\max}, \; \forall k \in \mathcal{K} \nonumber .
\end{align}
Ràng buộc (C1) thể hiện mỗi UE được phép kết nối với nhiều nhất một BS. Ràng buộc (C2) cho rằng mỗi BS phục vụ ít nhất một UE. Trong ràng buộc (C3), biến kết hợp $x_{nk}$ có thể là $0$ hoặc $1$. Cuối cùng, ràng buộc (C4) giới hạn của mỗi BS thứ $k$ bởi công suất phát tối đa $p_{k}^{\max}$. Gọi tập hợp nghiệm khả thi là $\mathcal{F}$.

Quan sát ta có,~\eqref{Eq:OptProblem} là một bài toán phân số phi tuyến tính hỗn hợp số nguyên; do đó, nó thực sự khó giải quyết trong một hàm đa thức thời gian. Bởi vậy, các tác giả đề xuất sử dụng quy trình Dinkelbach’s và khung tối ưu hóa xen kẽ. Đặc biệt, nhóm tác giả chuyển đổi bài toán \eqref{Eq:OptProblem} thành một dạng trừ, được tách thành một chuỗi bài toán điều khiển công suất cho một liên kết người dùng cố định và một bài toán liên kết người dùng cho một phân bổ công suất cố định.

\section{Đề xuất thuật toán}
\label{Sec:ProposedAlgorithm}
Xác định bằng $q^{*}$ hiệu suất năng lượng lớn nhất đạt được với nghiệm tối ưu$(\boldsymbol{p}^{*},\boldsymbol{x}^{*})$,, tức là $q^{*} = R(\boldsymbol{p}^{*},\boldsymbol{x}^{*})/P(\boldsymbol{p}^{*},\boldsymbol{x}^{*})$. Định lý sau xác định các điều kiện để đạt được lời giải tối ưu cho bài toán~\eqref{Eq:OptProblem}.

\begin{theorem}
Kết quả tối ưu $(\boldsymbol{p}^{*},\boldsymbol{x}^{*})$ đạt được hiệu suất năng lượng tối ưu  $q^{*}$ khi và chỉ khi
\begin{equation}\label{Eq:Dkb_Sub_OptProblem}
\begin{split}
\max\limits_{\boldsymbol{p},\boldsymbol{x} \in \mathcal{F}}\left[ R(\boldsymbol{p},\boldsymbol{x}) - q^{*}P(\boldsymbol{p},\boldsymbol{x}) \right] &= R(\boldsymbol{p}^{*},\boldsymbol{x}^{*}) - q^{*}P(\boldsymbol{p}^{*},\boldsymbol{x}^{*}) \nonumber \\ 
&= 0.
\end{split}
\end{equation}
Do đó, nếu biết trước $q^{*}$, chúng ta có thể tìm lời giải tối ưu cho bài toán~\eqref{Eq:OptProblem} bằng cách giải tương đương bài toán sau:
\begin{equation} \label{Eq:Sub_OptProblem}
\max\limits_{\boldsymbol{p},\boldsymbol{x} \in \mathcal{F}}\left[ R(\boldsymbol{p},\boldsymbol{x}) - q^{*}P(\boldsymbol{p},\boldsymbol{x}) \right].
\end{equation}
\end{theorem}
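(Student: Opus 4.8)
The plan is to run the classical Dinkelbach equivalence argument for fractional programs. The only structural facts I need are that the denominator $P(\boldsymbol{p},\boldsymbol{x})$ is strictly positive everywhere on $\mathcal{F}$ (the fixed term $P_{c}>0$, all remaining summands are nonnegative, and $y_{k}\ge 1$ by (C2) keeps every $r_{nk}/y_{k}$ well defined), and that $\mathcal{F}$ is compact — a finite union, over the admissible binary assignments $\boldsymbol{x}$, of boxes $\prod_{k}[0,p_{k}^{\max}]$ — so that $R$ and $P$ are continuous and the auxiliary maximum in \eqref{Eq:Sub_OptProblem} is actually attained. I would record these as a one-line preliminary remark.

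Next I would prove necessity. Suppose $(\boldsymbol{p}^{*},\boldsymbol{x}^{*})$ attains $q^{*}=R(\boldsymbol{p}^{*},\boldsymbol{x}^{*})/P(\boldsymbol{p}^{*},\boldsymbol{x}^{*})=\max_{\mathcal{F}} R/P$. Then for every $(\boldsymbol{p},\boldsymbol{x})\in\mathcal{F}$ we have $R(\boldsymbol{p},\boldsymbol{x})/P(\boldsymbol{p},\boldsymbol{x})\le q^{*}$; multiplying through by $P(\boldsymbol{p},\boldsymbol{x})>0$ gives $R(\boldsymbol{p},\boldsymbol{x})-q^{*}P(\boldsymbol{p},\boldsymbol{x})\le 0$, with equality at $(\boldsymbol{p}^{*},\boldsymbol{x}^{*})$. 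Hence $\max_{\mathcal{F}}[R-q^{*}P]=0$ and $(\boldsymbol{p}^{*},\boldsymbol{x}^{*})$ is a maximizer, which is exactly the displayed chain of equalities.

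Then I would prove sufficiency by reversing the steps: if $\max_{\mathcal{F}}[R-q^{*}P]=0$ is attained at $(\boldsymbol{p}^{*},\boldsymbol{x}^{*})$, then $R-q^{*}P\le 0$ on all of $\mathcal{F}$, i.e. $R/P\le q^{*}$ everywhere, while $R(\boldsymbol{p}^{*},\boldsymbol{x}^{*})-q^{*}P(\boldsymbol{p}^{*},\boldsymbol{x}^{*})=0$ forces $R(\boldsymbol{p}^{*},\boldsymbol{x}^{*})/P(\boldsymbol{p}^{*},\boldsymbol{x}^{*})=q^{*}$; so $(\boldsymbol{p}^{*},\boldsymbol{x}^{*})$ is optimal for \eqref{Eq:OptProblem} and achieves $q^{*}$. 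The closing ``therefore'' clause is the same sign argument read once more: any maximizer of \eqref{Eq:Sub_OptProblem} satisfies $R-q^{*}P=0$, hence $R/P=q^{*}$, hence solves the original fractional problem. There is no genuine obstacle here; the only place demanding care is the positivity/compactness remark, since it is what legitimizes multiplying and dividing by $P$ without flipping the inequality and guarantees that the auxiliary problem has a solution to speak of in the first place.
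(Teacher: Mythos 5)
Your proof is correct: it is the standard Dinkelbach equivalence argument (positivity of $P$ on $\mathcal{F}$, compactness so the maxima are attained, then the two sign-chasing directions), and both directions are carried out without error. The paper itself never proves this theorem --- it states it and defers to the classical Dinkelbach reference --- so your write-up simply supplies the omitted standard proof, and the one point you rightly flag as needing care (that $P \geq P_{c} > 0$ on all of $\mathcal{F}$ and that $\mathcal{F}$ is a finite union of compact boxes) is exactly the hypothesis that makes the citation legitimate here.
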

Tuy nhiên, giá trị tối ưu $(\boldsymbol{p}^{*}$ thường không được xác định trước. Do  đó,tác giải đề xuất một thuật toán, trong đó sẽ giải quyết vấn đề \eqref{Eq:Sub_OptProblem} với $(\boldsymbol{p}$ cho trước và giá trị của $(\boldsymbol{p}$ được thay đổi sau mỗi lần lặp lại. Bây giờ, bài toán \eqref{Eq:Sub_OptProblem} được viết lại như sau:
\begin{equation} \label{Eq:SubEq_OptProblem}
\max\limits_{\boldsymbol{p},\boldsymbol{x} \in \mathcal{F}} \sum\limits_{k \in \mathcal{K}}\sum\limits_{n \in \mathcal{N}}\frac{1 - q\xi_{k}}{y_{k}} x_{nk}r_{nk} - q\sum\limits_{k \in \mathcal{K}}\varrho_{k}p_{k}.
\end{equation}

\subsection{Kết hợp người dùng và cố định phân bổ công suất}
Đối với phân bổ công suất đã cho $\boldsymbol{p}$, bây giờ chúng ta giải quyết vấn đề \eqref{Eq:SubEq_OptProblem} đối với (w.r.t.) vectơ kết hợp $\boldsymbol{x}$. Trong trường hợp này, bài toán  \eqref{Eq:SubEq_OptProblem} tương đương với:
\begin{align} \label{Eq:UEAssociation_Problem}
& \max\limits_{\boldsymbol{x}} \sum\limits_{k \in \mathcal{K}}\sum\limits_{n \in \mathcal{N}}\frac{1 - q\xi_{k}}{y_{k}} x_{nk}r_{nk}
\\
& \text{s.t.} 
\quad (\text{C1}), (\text{C2}), (\text{C3}) \nonumber .
\end{align}
Để giải bài toán \eqref{Eq:UEAssociation_Problem}, chúng ta đưa ra các biến phụ $t_{nk}$ như là $r_{nk} \geq \sum\nolimits_{n \in \mathcal{N}}x_{nk}t_{nk}$. Bài toán \eqref{Eq:UEAssociation_Problem} có thể được viết lại tương đương như:
\begin{align} \label{Eq:UEAssociation_Aux_Problem}
& \max\limits_{\boldsymbol{x}} \sum\limits_{k \in \mathcal{K}}\sum\limits_{n \in \mathcal{N}}\left( 1 - q\xi_{k} \right) x_{nk}t_{nk}
\\
& \text{s.t.} 
\quad (\text{C1}), (\text{C2}), (\text{C3}) \nonumber, \\
&
\quad\quad\: r_{nk} \geq y_{k}t_{nk}, \; \forall n \in \mathcal{N}, \; \forall k \in \mathcal{K} \nonumber.
\end{align}
Định lý sau đây cho phép chúng ta đúc kết lại bài toán~\eqref{Eq:UEAssociation_Aux_Problem} một cách tương đương như một bài toán có khả năng giải trừ.

\begin{theorem}	\label{Theo:UEAssociation_Sub_Problem}
Coi $(\boldsymbol{x}^{*},\boldsymbol{t}^{*})$ là nghiệm tối ưu của bài toán~\eqref{Eq:UEAssociation_Aux_Problem}, tồn tại $\boldsymbol{\lambda}=[\boldsymbol{\lambda}_{1};...;\boldsymbol{\lambda}_{K}]$, trong đó $\boldsymbol{\lambda}_{k}=[\lambda_{1k},...,\lambda_{Nk}]^{T}$, sao cho $\boldsymbol{x}^{*}$ là nghiệm tối ưu cho bài toán sau, tức là thỏa mãn các điều kiện tối ưu KKT của nó, với  $\boldsymbol{\lambda}=\boldsymbol{\lambda}^{*}$ and $\boldsymbol{t}=\boldsymbol{t}^{*}$
\begin{align} \label{Eq:UEAssociation_Sub_Problem}
& \max\limits_{\boldsymbol{x}} \sum\limits_{k \in \mathcal{K}}\sum\limits_{n \in \mathcal{N}}x_{nk}\left[ \left( 1 - q \xi_{k} \right)t_{nk} - \sum\limits_{i \in \mathcal{N}}\lambda_{ik}t_{ik} \right]
\\
& \emph{s.t.} 
\quad (\emph{C1}), (\emph{C2}), (\emph{C3}) \nonumber.
\end{align}
Nghiệm tối ưu $\boldsymbol{x}^{*}$ thỏa mãn hệ phương trình sau cho  $\boldsymbol{\lambda}=\boldsymbol{\lambda}^{*}$ and $\boldsymbol{t}=\boldsymbol{t}^{*}$
\begin{align}
& 
\lambda_{nk} = \left( 1 - q \xi_{k} \right)x_{nk}/y_{k} \label{Eq:lambda_Sub_Problem} \\
&
t_{nk} = r_{nk}/y_{k}.	\label{Eq:t_Sub_Problem}
\end{align}
Ngược lại, nếu  $\boldsymbol{x}^{*}$ là nghiệm tối ưu của bài toán~\eqref{Eq:UEAssociation_Sub_Problem} và thỏa mãn hệ phương trình ~\eqref{Eq:lambda_Sub_Problem} và~\eqref{Eq:t_Sub_Problem} với $\boldsymbol{\lambda}=\boldsymbol{\lambda}^{*}$ và $\boldsymbol{t}=\boldsymbol{t}^{*}$, $\left(\boldsymbol{x}^{*}, \boldsymbol{t}^{*}\right)$ là nghiệm tối ưu bài toán ~\eqref{Eq:UEAssociation_Aux_Problem} với $\boldsymbol{\lambda}=\boldsymbol{\lambda}^{*}$ là biến kép liên quan đến ràng buộc cuối cùng.
\end{theorem}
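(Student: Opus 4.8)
\emph{Proof strategy.} Everything follows from one algebraic reduction. Dualize \emph{only} the coupling constraint $r_{nk}\ge y_{k}t_{nk}$ of~\eqref{Eq:UEAssociation_Aux_Problem} with a multiplier $\lambda_{nk}\ge 0$, keeping (C1)--(C3) explicit; substituting $y_{k}=\sum_{m}x_{mk}$ and reindexing one of the double sums turns the partial Lagrangian into
\begin{equation*}
L(\boldsymbol{x},\boldsymbol{t};\boldsymbol{\lambda})=\sum_{k}\sum_{n}x_{nk}\Bigl[(1-q\xi_{k})t_{nk}-\sum_{i}\lambda_{ik}t_{ik}\Bigr]+\sum_{k}\sum_{n}\lambda_{nk}r_{nk},
\end{equation*}
whose last term is a constant once $\boldsymbol{p}$ (hence $r_{nk}$) is fixed. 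Thus for frozen $(\boldsymbol{\lambda},\boldsymbol{t})$ the problem $\max_{\boldsymbol{x}}L$ over (C1)--(C3) is, up to that constant, exactly~\eqref{Eq:UEAssociation_Sub_Problem}. I would keep this identity at the center of both implications.

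\emph{Necessity.} Let $(\boldsymbol{x}^{*},\boldsymbol{t}^{*})$ solve~\eqref{Eq:UEAssociation_Aux_Problem}. First pin down $\boldsymbol{t}^{*}$: whenever $x_{nk}^{*}>0$ the objective strictly increases in $t_{nk}$ (here one needs $1-q\xi_{k}\ge 0$, which holds in the operating regime $q\le 1/\max_{k}\xi_{k}$), so the constraint $r_{nk}\ge y_{k}t_{nk}$ is active and $t_{nk}^{*}=r_{nk}/y_{k}^{*}$; for $x_{nk}^{*}=0$ the value of $t_{nk}$ is immaterial and may be set to $r_{nk}/y_{k}^{*}$ as well, giving~\eqref{Eq:t_Sub_Problem}. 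Since the only nonlinear constraint has derivative $y_{k}\ge 1>0$ in the otherwise loosely constrained variable $t_{nk}$ while (C1)--(C3) are affine, a constraint qualification holds and the KKT conditions apply; stationarity in $t_{nk}$ reads $(1-q\xi_{k})x_{nk}^{*}-\lambda_{nk}^{*}y_{k}^{*}=0$, which is~\eqref{Eq:lambda_Sub_Problem}. Finally, by the identity above the stationarity-in-$\boldsymbol{x}$ and complementary-slackness relations attached to (C1)--(C3) at $(\boldsymbol{x}^{*},\boldsymbol{t}^{*},\boldsymbol{\lambda}^{*})$ are precisely the KKT conditions of~\eqref{Eq:UEAssociation_Sub_Problem} at $\boldsymbol{x}^{*}$; and because~\eqref{Eq:UEAssociation_Sub_Problem} is a linear program over the bipartite-incidence (totally unimodular) polytope cut out by (C1)--(C3), those conditions certify that $\boldsymbol{x}^{*}$ is globally optimal for it, the $[0,1]$-relaxation being tight.

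\emph{Sufficiency.} Conversely, suppose $\boldsymbol{x}^{*}$ is optimal for~\eqref{Eq:UEAssociation_Sub_Problem} and~\eqref{Eq:lambda_Sub_Problem}--\eqref{Eq:t_Sub_Problem} hold. I would build a KKT certificate for~\eqref{Eq:UEAssociation_Aux_Problem} at $(\boldsymbol{x}^{*},\boldsymbol{t}^{*})$: take $\boldsymbol{\lambda}^{*}$ from~\eqref{Eq:lambda_Sub_Problem} (nonnegative because $1-q\xi_{k}\ge 0$ and $x_{nk}^{*}\ge 0$) and import the (C1)--(C3) multipliers from the KKT system of the linear program~\eqref{Eq:UEAssociation_Sub_Problem}. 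Then primal feasibility holds ($\boldsymbol{x}^{*}$ obeys (C1)--(C3) and~\eqref{Eq:t_Sub_Problem} gives $y_{k}^{*}t_{nk}^{*}-r_{nk}=0$), complementary slackness for the coupling constraint is automatic, stationarity in $t_{nk}$ is~\eqref{Eq:lambda_Sub_Problem}, and, by the identity, stationarity in $x_{nk}$ for the Lagrangian of~\eqref{Eq:UEAssociation_Aux_Problem} is stationarity for~\eqref{Eq:UEAssociation_Sub_Problem}, which $\boldsymbol{x}^{*}$ satisfies. Hence $(\boldsymbol{x}^{*},\boldsymbol{t}^{*})$ is a KKT point of~\eqref{Eq:UEAssociation_Aux_Problem} with $\boldsymbol{\lambda}^{*}$ the multiplier of the last constraint.

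\emph{Main obstacle.} The delicate point is that~\eqref{Eq:UEAssociation_Aux_Problem} is bilinear in $(\boldsymbol{x},\boldsymbol{t})$, hence non-convex, so the KKT certificate above does not by itself make $(\boldsymbol{x}^{*},\boldsymbol{t}^{*})$ a global maximizer; the sufficiency direction therefore needs one more step. I would try to supply it by noting that, with $(\boldsymbol{\lambda}^{*},\boldsymbol{t}^{*})$ frozen at their self-consistent values, $\boldsymbol{x}\mapsto L(\boldsymbol{x},\boldsymbol{t}^{*};\boldsymbol{\lambda}^{*})$ is \emph{linear} over the integral polytope (C1)--(C3), so optimality of $\boldsymbol{x}^{*}$ for~\eqref{Eq:UEAssociation_Sub_Problem} makes it the global maximizer of this linear surrogate; one then has to show that for every feasible $(\boldsymbol{x},\boldsymbol{t})$ of~\eqref{Eq:UEAssociation_Aux_Problem} the surrogate value at $\boldsymbol{x}$ dominates the true objective (using $t_{nk}\le r_{nk}/y_{k}$ together with the value matching from the necessity part). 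Making that last inequality tight, rather than settling for the fixed-point/stationarity characterization that the alternating algorithm actually targets, is where I expect the real work to lie; by contrast the "$t_{nk}=r_{nk}/y_{k}$ without loss" reduction, the constraint-qualification check, and the whole necessity direction are routine KKT bookkeeping.
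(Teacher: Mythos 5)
Your proposal follows essentially the same route as the paper's appendix: form the Lagrangian of~\eqref{Eq:UEAssociation_Aux_Problem} with multiplier $\boldsymbol{\lambda}$ on the coupling constraint, use stationarity in $t_{nk}$ and complementary slackness to obtain~\eqref{Eq:lambda_Sub_Problem}--\eqref{Eq:t_Sub_Problem}, and observe that after reindexing the double sum the partial Lagrangian in $\boldsymbol{x}$ coincides (up to the constant $\sum_{k}\sum_{n}\lambda_{nk}r_{nk}$) with the objective of~\eqref{Eq:UEAssociation_Sub_Problem}, so the two problems share their KKT system. The extra concerns you raise --- the sign condition $1-q\xi_{k}\ge 0$, the constraint qualification, and the fact that a KKT certificate does not yield global optimality for the bilinear problem --- are legitimate but are not addressed in the paper, which implicitly reads ``optimal'' as ``KKT point'' in both directions and dispatches the converse with ``proved similarly,'' so your version is if anything more careful than the original.
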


\begin{proof}
Định lý sẽ được chứng minh thêm ở phần Phụ lục.
\end{proof}

Quan sát thấy rằng vấn đề ~\eqref{Eq:UEAssociation_Sub_Problem} được tối đa hóa bằng cách kết nối mỗi UE với BS, cung cấp tiện ích cao nhất cho UE, tức là., $k^{*} = \max\nolimits_{k \in \mathcal{K}}\left\lbrace \left( 1 - q \xi_{k} \right)t_{nk} - \sum\nolimits_{i \in \mathcal{N}}\lambda_{ik}t_{ik} \right\rbrace, \forall n \in \mathcal{N}$. Tuy nhiên, mỗi BS được yêu cầu phục vụ ít nhất một UE. Do đó, dựa trên Định lý \ref{Theo:UEAssociation_Sub_Problem}, các tác giả đề xuất một thuật toán \textbf{heuristic} cho bài toán liên kết người dùng. đầu tiên, mỗi người dùng UE tính toán tiện ích của nó $\left( 1 - q \xi_{k} \right)t_{nk} - \sum\nolimits_{i \in \mathcal{N}}\lambda_{ik}t_{ik}, \forall k \in \mathcal{K}$ và phổ biến đến tập các BS. Sau đó, mỗi BS tuần tự chọn một trong số $N$ UEs có tiện ích cao nhất. Cuối cùng, mỗi UE không liên kết kết nối với BS mà nó có tiện ích cao nhất. Quy trình này được lặp lại cho đến khi hội tụ, trong đó $\boldsymbol{t}$ và $\boldsymbol{\lambda}$ được cập nhật thông qua phương pháp giống Newton. 

\subsection{Kiểm soát công suất và cố định các liên kết người dùng}
Đối với liên kết người dùng cố định, vấn đề kiểm soát nguồn là
\begin{align} \label{Eq:PowerControl_Problem}
& \max\limits_{\boldsymbol{p}} \sum\limits_{k \in \mathcal{K}}\sum\limits_{n \in \mathcal{N}}\frac{1 - q\xi_{k}}{y_{k}} x_{nk}r_{nk} - q\sum\limits_{k \in \mathcal{K}}\varrho_{k}p_{k}
\\
& \text{s.t.} 
\quad 0 \leq p_{k} \leq p_{k}^{\max}, \; \forall k \in \mathcal{K}. \nonumber
\end{align}
Vì \eqref{Eq:PowerControl_Problem} là một bài toán NP-Hard \cite{Luo2008}; do đó, một thuật toán với thời gian đa thức để tìm giải pháp tối ưu là không thể thực hiện được và giải pháp tối ưu toàn cục phải được tìm ra bằng cách sử dụng các phương pháp tối ưu hóa toàn cục. Trong phần tiếp theo, bài toán \eqref{Eq:PowerControl_Problem} được xấp xỉ thành một chuỗi chương trình lồi bằng cách sử dụng phương pháp SCALE được đề xuất trong \cite{John2009}. Cụ thể, $\alpha\log(z) + \beta \leq \log(1 + z)$, hội tụ tại $z = \tilde{z}$ khi các hệ số gần đúng là $\alpha = \tilde{z}/(1 + \tilde{z})$ và $\beta = \log(1 + \tilde{z}) -  \tilde{z}(1 + \tilde{z})^{-1}\log(\tilde{z})$. Sử dụng các biến biến phụ $\boldsymbol{\rho}$ sao cho $\rho_{k} = \log(p_{k})$ và áp dụng phương pháp SCALE, bài toán \eqref{Eq:PowerControl_Problem} có thể được viết lại như sau:
\begin{align} \label{Eq:PowerControl_Problem_SCALE}
& \max\limits_{\boldsymbol{\rho}} \sum\limits_{k \in \mathcal{K}}\sum\limits_{n \in \mathcal{N}}\frac{1 - q\xi_{k}}{y_{k}} x_{nk}\tilde{r}_{nk}(e^{\boldsymbol{\rho}}) - q\sum\limits_{k \in \mathcal{K}}\varrho_{k}e^{\rho_{k}}
\\
& \text{s.t.} 
\quad \rho_{k} \leq \log(p_{k}^{\max}), \; \forall k \in \mathcal{K}, \nonumber
\end{align}
trong đó $\tilde{r}_{nk}(e^{\boldsymbol{\rho}}) = \alpha_{nk}\log(\gamma_{nk}(e^{\boldsymbol{\rho}})) + \beta_{nk}$.
Bây giờ, \eqref{Eq:PowerControl_Problem_SCALE} là một bài toán tối ưu hóa lồi; do đó, chúng ta có thể sử dụng bất kỳ bộ giải lồi nào có sẵn để giải \cite{Cam2004}. 

Các tác giả đề xuất sử dụng phương pháp "descent" - phương pháp lấy gốc xen kẽ - để giải bài toán trừ \eqref{Eq:SubEq_OptProblem}. Cụ thể, phương pháp rút gốc luân phiên phân tách vấn đề \eqref{Eq:SubEq_OptProblem} thành hai vấn đề riêng biệt: liên kết người dùng và kiểm soát quyền lực, đồng thời giải quyết từng vấn đề một cách cô lập trong cùng một khoảng thời gian. Quá trình này được lặp lại cho đến khi hội tụ như được tóm tắt trong Thuật toán \ref{Alg:Algorithm}. Trong phần tiếp theo, các tác giả sẽ cung cấp phân tích độ hội tụ và độ phức tạp của thuật toán được đề xuất.

Ngoài ra, các phương pháp học máy (Machine learning) dùng để tối ưu hàm mục tiêu cũng được tác giả cân nhắc, tính toán như phương pháp Neural Network, phương pháp hồi quy tuyến tính (Linear Regression). Tuy nhiên, sau khi triển khai một số thử nghiệm, mô phỏng trên mạng truyển thông ảo, tác giả nhận ra vấn đề áp dụng các thuật toán học máy trên thiết bị của người dùng và trên các trạm phát tiêu tốn một lượng lớn điện năng. Đặc biệt, trong môi trường thực tế, việc tính toán, tìm ra giải pháp tối ưu cho việc phân bổ người dùng càng phức tạp và khó khăn khi các thiết bị di động di chuyển với tốc độ cao, trong phạm vi lớn. Vì thế, nhóm tác giả cho rằng các phương pháp học máy là chưa phù hợp với yêu cầu của nghiên cứu.

Cuối cùng, các tác giả quyết định lựa chọn phương pháp "descent" để tối ưu bài toán giảm thiểu công suất của mạng HetNets. Thuật toán được đề xuất dưới dạng lập trình cơ bản như Thuật toán \ref{Alg:Algorithm}:
\begin{algorithm} 
\caption{Thuật toán đề xuất cho bài toán tối ưu công suất trong mạng không đống nhất}\label{Alg:Algorithm}
    \begin{algorithmic}[1]
    \State Set $\varepsilon > 0$, $t = 0$, $\text{FLAG} = 0$, và $q^{(t)} = 0$.
    \Repeat {}
    \State Giải quyết vấn đề liên kết người dùng để phân bổ công suất cố định để thu được $\boldsymbol{x}^{(t)}$.
    \State Giải quyết điều khiển công suất cho một giả định người dùng cố định để có được $\boldsymbol{p}^{(t)}$.
    \If {$\left| R(\boldsymbol{p}^{(t)},\boldsymbol{x}^{(t)}) - q^{(t)}P(\boldsymbol{p}^{(t)},\boldsymbol{x}^{(t)}) \right| \leq \varepsilon$}
    \State Giá trị tối ưu là $\boldsymbol{p}^{*} = \boldsymbol{p}^{(t)}$ và $\boldsymbol{p}^{*} = \boldsymbol{p}^{(t)}$.
    \State Hiệu quả công suất tối ưu là $\boldsymbol{q}^{*} = \boldsymbol{q}^{(t)}$.
    \State Gán $\text{FLAG} = 1$.
    \Else
    \State Tăng $t = t + 1$.
    \State Gán $q^{(t)} = R(\boldsymbol{p}^{(t-1)},\boldsymbol{x}^{(t-1)})/P(\boldsymbol{p}^{(t-1)},\boldsymbol{x}^{(t-1)})$.
    \EndIf
    \Until{$\text{FLAG} = 1$}
    \end{algorithmic}
\end{algorithm}

\subsection{Phân tích sự hội tụ và độ phức tạp của kết quả thu được}
Ở mỗi lần lặp lại của thuật toán được đề xuất, các tác giả giải quyết lần lượt một vấn đề liên kết người dùng và một vấn đề điều khiển công suất. Do đó, để chứng minh sự hội tụ của thuật toán đề xuất, cần phải chứng minh sự hội tụ của hai bài toán con bên trên. Nhóm tác giả trình bày hai định lý là Định lý 3 và Định lý 4. Ngoài ra, giá trị tối ưu $\boldsymbol{p}^{(*)}$ sẽ thu được bằng thuật toán Dinkelbach’s với đảm bảo hội tụ \cite{Book1967}. Do đó, nhóm tác giả kết luận rằng thuật toán được đề xuất trong nghiên cứu là hội tụ.

\begin{theorem}
Đối với $q^{(t)}$ và $\boldsymbol{p}^{(t)}$, cố định, thủ tục giải bài toán liên kết người dùng được đảm bảo hội tụ \cite{He2014}.
\end{theorem}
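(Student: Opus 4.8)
The plan is to exhibit the user-association iteration as a monotone ascent on a bounded objective over a finite feasible set, so that it must terminate at a fixed point satisfying the optimality conditions of Theorem~\ref{Theo:UEAssociation_Sub_Problem}; the abstract convergence guarantee for the Newton-like multiplier updates is then supplied by \cite{He2014}.

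First I would freeze $q=q^{(t)}$ and $\boldsymbol{p}=\boldsymbol{p}^{(t)}$, so that every rate $r_{nk}=r_{nk}(\boldsymbol{p}^{(t)})$ becomes a constant, and introduce the merit function $\Phi(\boldsymbol{x})=\sum_{k\in\mathcal{K}}\sum_{n\in\mathcal{N}}\frac{1-q\xi_{k}}{y_{k}}x_{nk}r_{nk}$, i.e. the objective of \eqref{Eq:UEAssociation_Problem}. One round of the procedure consists of: (a) forming the Newton-like estimates $t_{nk}=r_{nk}/y_{k}$ and $\lambda_{nk}=(1-q\xi_{k})x_{nk}/y_{k}$ from the current association, as in \eqref{Eq:t_Sub_Problem} and \eqref{Eq:lambda_Sub_Problem}; (b) letting each UE $n$ pick $k^{*}(n)\in\arg\max_{k\in\mathcal{K}}\{(1-q\xi_{k})t_{nk}-\sum_{i\in\mathcal{N}}\lambda_{ik}t_{ik}\}$, i.e. solving \eqref{Eq:UEAssociation_Sub_Problem}; (c) the heuristic repair that keeps every BS with at least one UE so that (C2) stays satisfied.

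The core step is the monotonicity claim $\Phi(\boldsymbol{x}^{(m+1)})\ge\Phi(\boldsymbol{x}^{(m)})$. I would argue that, by construction, the per-UE utility in \eqref{Eq:UEAssociation_Sub_Problem} is a first-order surrogate of $\Phi$ at $\boldsymbol{x}^{(m)}$: the congestion term $-\sum_{i\in\mathcal{N}}\lambda_{ik}t_{ik}$ is precisely the externality a UE joining BS $k$ imposes, through the factor $1/y_{k}$, on the UEs already served by $k$, so that the greedy reassignment in (b) follows an ascent direction of $\Phi$ and every accepted move contributes a non-negative increment; expanding $\Phi(\boldsymbol{x}^{(m+1)})-\Phi(\boldsymbol{x}^{(m)})$ and regrouping the per-BS terms against the updated multipliers \eqref{Eq:lambda_Sub_Problem} then yields the inequality. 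Since $\Phi$ is bounded above on the feasible region — indeed $\Phi(\boldsymbol{x})\le\sum_{k\in\mathcal{K}}\sum_{n\in\mathcal{N}}r_{nk}(\boldsymbol{p}^{(t)})<\infty$ because $y_{k}\ge1$, $|1-q\xi_{k}|$ is bounded and $\boldsymbol{p}^{(t)}$ lies in the compact box (C4) — and since the set $\{\boldsymbol{x}:(\mathrm{C1}),(\mathrm{C2}),(\mathrm{C3})\}$ is finite, the non-decreasing sequence $\{\Phi(\boldsymbol{x}^{(m)})\}_{m}$ attains only finitely many values and hence stabilises after finitely many rounds. When it stabilises no UE can strictly increase its surrogate utility, which is exactly the KKT system of \eqref{Eq:UEAssociation_Sub_Problem} together with \eqref{Eq:lambda_Sub_Problem} and \eqref{Eq:t_Sub_Problem}, so the limit point is a solution in the sense of Theorem~\ref{Theo:UEAssociation_Sub_Problem}; the contraction-type analysis of the Newton-like update for $(\boldsymbol{t},\boldsymbol{\lambda})$ in \cite{He2014} certifies convergence of the inner loop as a whole.

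The main obstacle I anticipate is making the monotonicity argument watertight in the presence of the repair step (c): forcing a UE onto a non-maximising BS to meet (C2) can a priori decrease $\Phi$, so one must show the repair is triggered only in a controlled way — for instance a BS about to be emptied retains its best UE while the displaced UEs are re-optimised — and bound its effect so that each round still ends with $\Phi$ non-decreasing. Tightly coupled with this is the non-linearity of $\Phi$ in the load variables $y_{k}$: the surrogate is linear whereas $\Phi$ is not, so the exact ascent inequality must be verified term by term, and it is here that the specific choice \eqref{Eq:lambda_Sub_Problem} of multipliers is indispensable and the bookkeeping heaviest.
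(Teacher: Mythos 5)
For the record, the paper offers no proof of this theorem at all: convergence of the user-association procedure is asserted purely by citation to \cite{He2014}, so there is no argument of record to measure yours against. Judged on its own terms, your proposal has the right skeleton (monotone ascent of a bounded merit function over a finite feasible set forces finite termination at a point satisfying the conditions of Theorem~\ref{Theo:UEAssociation_Sub_Problem}), but the load-bearing step is missing. The inequality $\Phi(\boldsymbol{x}^{(m+1)})\ge\Phi(\boldsymbol{x}^{(m)})$ for your merit function $\Phi$ does not follow from the objective of \eqref{Eq:UEAssociation_Sub_Problem} being a ``first-order surrogate'' of \eqref{Eq:UEAssociation_Problem}: that surrogate is linear in $\boldsymbol{x}$ with $(\boldsymbol{t},\boldsymbol{\lambda})$ frozen at the previous iterate via \eqref{Eq:lambda_Sub_Problem}--\eqref{Eq:t_Sub_Problem}, whereas $\Phi$ carries the loads $y_{k}$ in the denominator; a linearization of a function that is neither concave nor convex in the integer loads is not a global minorant, so maximizing the surrogate can overshoot and strictly decrease $\Phi$ --- in particular when many UEs re-associate simultaneously on the basis of stale multipliers, the classic ping-pong/overload failure mode of best-response association dynamics. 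To close this you would need either a majorization--minimization certificate (show the surrogate, up to a constant, lower-bounds $\Phi$ and touches it at $\boldsymbol{x}^{(m)}$) or a one-UE-at-a-time exchange argument with an explicit term-by-term evaluation of $\Phi(\boldsymbol{x}^{(m+1)})-\Phi(\boldsymbol{x}^{(m)})$; you gesture at the latter (``expanding and regrouping yields the inequality'') but never perform it, and that computation is exactly where the proof lives or dies.

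The second gap you identify yourself and then leave open: the repair step enforcing (C2) can push a UE onto a non-maximizing BS and hence decrease $\Phi$ even if the greedy step were a genuine ascent, so without a concrete repair rule and a bound on its effect the finite-termination conclusion does not follow. Finally, invoking a ``contraction-type analysis'' of the Newton-like $(\boldsymbol{t},\boldsymbol{\lambda})$ update from \cite{He2014} is a placeholder rather than an argument --- though, to be fair, a bare appeal to that reference is also all the paper itself provides.
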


\begin{theorem}
Với phép xấp xỉ logarit, đối với $q^{(t)}$ và $\boldsymbol{x}^{(t+1)}$ cố định, lời giải cho bài toán xấp xỉ \eqref{Eq:PowerControl_Problem_SCALE} đơn điệu cải thiện hàm mục tiêu của nó. Ngoài ra, lời giải cho \eqref{Eq:PowerControl_Problem_SCALE} hội tụ về một điểm, thỏa mãn điều kiện tối ưu KKT của bài toán \eqref{Eq:PowerControl_Problem}, đã được chứng minh ở bài báo \cite{QVPFairness}.
\end{theorem}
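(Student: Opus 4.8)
The plan is to read \eqref{Eq:PowerControl_Problem_SCALE} as one step of a minorization--maximization (successive convex approximation) scheme applied to \eqref{Eq:PowerControl_Problem}, and to verify the two hypotheses that make such a scheme produce a monotone objective sequence converging to a KKT point: the SCALE surrogate must be a global lower bound of the true objective, and it must be \emph{exact} at the current linearization point in both value and gradient.

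\textbf{Step 1 (properties of the SCALE minorant).} Fix $q=q^{(t)}$, $\boldsymbol{x}=\boldsymbol{x}^{(t+1)}$ and write $\rho_k=\log p_k$. For a surrogate built at $\tilde{\boldsymbol\rho}$, the bound $\tilde r_{nk}(e^{\boldsymbol\rho})=\alpha_{nk}\log\gamma_{nk}(e^{\boldsymbol\rho})+\beta_{nk}$ satisfies: (i) $\tilde r_{nk}(e^{\boldsymbol\rho})\le r_{nk}(e^{\boldsymbol\rho})$ for every $\boldsymbol\rho$, which is exactly the inequality $\alpha\log z+\beta\le\log(1+z)$ quoted before \eqref{Eq:PowerControl_Problem_SCALE}; (ii) equality at $\boldsymbol\rho=\tilde{\boldsymbol\rho}$, from $\alpha=\tilde z/(1+\tilde z)$ and $\beta=\log(1+\tilde z)-\tilde z(1+\tilde z)^{-1}\log\tilde z$; (iii) $\nabla_{\boldsymbol\rho}\tilde r_{nk}(e^{\boldsymbol\rho})|_{\tilde{\boldsymbol\rho}}=\nabla_{\boldsymbol\rho}r_{nk}(e^{\boldsymbol\rho})|_{\tilde{\boldsymbol\rho}}$, because a differentiable tight lower bound makes $r_{nk}-\tilde r_{nk}$ a nonnegative function with an interior zero, hence a stationary point, at $\tilde{\boldsymbol\rho}$. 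Since the weights $(1-q^{(t)}\xi_k)/y_k$ are nonnegative in the Dinkelbach operating range and the penalty $-q^{(t)}\sum_k\varrho_k e^{\rho_k}$ is reproduced exactly, (i)--(iii) lift from the per-link bound to the whole objective: writing $F(\boldsymbol\rho;\tilde{\boldsymbol\rho})$ for the objective of \eqref{Eq:PowerControl_Problem_SCALE} and $G(\boldsymbol\rho)$ for that of \eqref{Eq:PowerControl_Problem}, one gets $F(\boldsymbol\rho;\tilde{\boldsymbol\rho})\le G(\boldsymbol\rho)$, $F(\tilde{\boldsymbol\rho};\tilde{\boldsymbol\rho})=G(\tilde{\boldsymbol\rho})$ and $\nabla F(\tilde{\boldsymbol\rho};\tilde{\boldsymbol\rho})=\nabla G(\tilde{\boldsymbol\rho})$. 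Moreover $F(\cdot;\tilde{\boldsymbol\rho})$ is concave — each $\tilde r_{nk}$ is concave in $\boldsymbol\rho$ by convexity of log-sum-exp, and $-e^{\rho_k}$ is concave — so the update $\tilde{\boldsymbol\rho}\mapsto\arg\max_{\rho_k\le\log p_k^{\max}}F(\boldsymbol\rho;\tilde{\boldsymbol\rho})$ is well defined.

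\textbf{Step 2 (monotonicity).} With $\boldsymbol\rho^{(t+1)}$ the maximizer of $F(\cdot;\boldsymbol\rho^{(t)})$, the chain $G(\boldsymbol\rho^{(t+1)})\ge F(\boldsymbol\rho^{(t+1)};\boldsymbol\rho^{(t)})\ge F(\boldsymbol\rho^{(t)};\boldsymbol\rho^{(t)})=G(\boldsymbol\rho^{(t)})$ — first step by (i), second by optimality of $\boldsymbol\rho^{(t+1)}$, third by (ii) — shows the objective of \eqref{Eq:PowerControl_Problem_SCALE}, evaluated against the true objective, is nondecreasing along the iterations, which is the first assertion. Since $G$ is continuous and bounded above on the feasible set (the caps $p_k\le p_k^{\max}$ bound the rates, the penalty is nonpositive), the sequence $\{G(\boldsymbol\rho^{(t)})\}$ converges. \textbf{Step 3 (KKT limit point).} Pass to a convergent subsequence — best done in the original variables on the compact box $\prod_k[0,p_k^{\max}]$, so that $p_k^\star=0$ is absorbed — obtaining $\boldsymbol p^{(t_\ell)}\to\boldsymbol p^\star$. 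Then $\boldsymbol p^\star$ is a fixed point of the closed solution map, hence satisfies the KKT system of the surrogate linearized at $\boldsymbol p^\star$; by (iii) the surrogate and the true objective have the same gradient there and the constraint sets coincide, so $\boldsymbol p^\star$ satisfies the KKT conditions of \eqref{Eq:PowerControl_Problem}. This is the Marks--Wright-style argument detailed in \cite{QVPFairness} and \cite{John2009}, which I would cite for the routine parts and invoke here only after (i)--(iii) are in hand.

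The main obstacle is Step 3: monotone convergence of objective values alone does not pin down a limit of the iterates, so one needs compactness together with closedness/continuity of the update map, and then identification of the limit as a KKT point via the gradient-exactness (iii). The delicate sub-point is the degeneracy at $p_k=0$, where $\rho_k\to-\infty$ and the constraint qualification in the $\boldsymbol\rho$ picture breaks down; this is exactly why the cleanest argument retains the original power variables and treats the SCALE surrogate as a smooth, gradient-exact minorant on the relative interior.
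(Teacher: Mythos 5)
The paper contains no proof of this theorem: the statement simply defers to \cite{QVPFairness} (and implicitly to the SCALE analysis of \cite{John2009}), and the appendix only proves Theorem~\ref{Theo:UEAssociation_Sub_Problem}. What you have written is therefore not an alternative to the paper's argument but a reconstruction of the standard minorization--maximization proof that the cited references rely on, and it is essentially correct: the three properties in your Step 1 (global minorant, value-exactness, gradient-exactness at the linearization point) are precisely what is needed, the monotonicity chain in Step 2 is the standard one, and Step 3 correctly recognizes that monotone objective values alone do not yield a KKT limit point, so compactness plus closedness of the update map must be invoked. Two points that you leave as asides should be promoted to explicit hypotheses. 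First, the lifting of the per-link bound to the full objective requires $1-q^{(t)}\xi_{k}\ge 0$ for every $k$; if some $q^{(t)}\xi_{k}>1$ the corresponding weighted term becomes a \emph{majorant} rather than a minorant and the inequality $F(\boldsymbol{\rho};\tilde{\boldsymbol{\rho}})\le G(\boldsymbol{\rho})$, hence the whole monotonicity chain, fails. This holds for the paper's simulation values (energy efficiencies near $0.1$ with $\xi_{k}=1$) but is not guaranteed a priori along Dinkelbach updates, so it must be stated as a standing assumption. Second, closedness of the solution map in Step 3 needs the coefficients $\alpha_{nk},\beta_{nk}$ to depend continuously on the linearization point, which degenerates as $\tilde{z}\to 0$ (equivalently $p_{k}\to 0$) --- the same degeneracy you flag at the end --- so the limit point must be assumed, or shown, to have strictly positive powers before the KKT identification via gradient-exactness goes through. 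With those two caveats made explicit, your argument is complete and, unlike the paper, self-contained.
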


Giả sử rằng chúng ta sử dụng phương pháp đối ngẫu để giải quyết vấn đề kiểm soát công suất \eqref{Eq:PowerControl_Problem}. Gọi $T_{2}$ là số lần lặp cần thiết để cập nhật các hệ số xấp xỉ $\boldsymbol{\alpha}$ và $\boldsymbol{\beta}$. Đòng thời, $L$ là số lần lặp để giải bài toán \eqref{Eq:PowerControl_Problem_SCALE} trong miền đối ngẫu. Khi đó, độ phức tạp tính toán của mỗi bước để giải quyết vấn đề điều khiển công suất là $\mathcal{O}\left( T_{2}LK \right)$. Độ phức tạp cần thiết để giải bài toán liên kết người dùng \eqref{Eq:UEAssociation_Problem} là $\mathcal{O}\left( (m+1)T_{1}N(N-K) \right)$, trong đó $T_{1}$ là số lần lặp cần thiết để giải bài toán \eqref{Eq:UEAssociation_Sub_Problem}, $N-K = 1$ nếu $N = K$, tức là, số lượng BS không nhỏ hơn số lượng UE, và $m$ là một số nhỏ được sử dụng để cập nhật $\boldsymbol{t}$ và $\boldsymbol{\lambda}$. Gọi $T_{3}$ là số lần lặp cần thiết để cập nhật $q$, bước $11$ trong Thuật toán \ref{Alg:Algorithm}. Khi đó, giá trị lớn nhất giữa $\mathcal{O}\left( (m+1)T_{1}N(N-K)T_{3} \right)$ và $\mathcal{O}\left( T_{2}LKT_{3} \right)$ là độ phức tạp tính toán của thuật toán đề xuất.

\section{Kết quả mô phỏng}
\label{Sec:SimulationResults}

Xem xét một mạng không dây hỗn hợp hai tầng, trong đó có $8$ trạm phát sóng trong nhà - "femtocell" được triển khai ngẫu nhiên trong vùng bao phủ của cell cỡ lớn là $500$(m) $\times$ $500$(m) trung tâm là MBS. Nhóm tác giả giả định rằng $240$ thiết bị di động được phân phối ngẫu nhiên trên vùng phủ của cell cỡ lớn và cấu trúc liên kết mạng được cố định trong thời gian mô phỏng. Các tác giả cũng giả định rằng khoảng cách tối thiểu giữa MBS và UE là $35$m, giữa các BS nhỏ và UE là $10$ m, và giữa hai UE bất kỳ là $3$m. Công suất phát lớn nhất của MBS và BS nhỏ lần lượt là $46$dBm và $30$dBm, công suất tiêu thụ tĩnh của MBS và BS nhỏ lần lượt là $10$W và $0.1$W, nghịch đảo của hiệu suất tiêu của MBS và BS nhỏ lần lượt là $4$ và $2$, và thông số công suất tiêu thụ $\xi_{k}$ là 1 W/Mbps cho tất cả các liên kết trạm lõi. Băng thông của hệ thống là 10 MHz và công suất tiếng ồn nền $n_{0}$ là $-104$ dBm. Mô hình suy hao đường dẫn cho MBS là $g_{nk} = 128.1 + 37.6\log_{10}(d_{nk})$ và cho các BS nhỏ là $g_{nk} = 140.7 + 36.7\log_{10}(d_{nk})$, trong đó $d_{nk}$ là khoảng cách. Cuối cùng tác giả giả định rằng mức tăng công suất kênh bao gồm suy hao đường dẫn và mờ dần bóng mờ, có độ lệch chuẩn $8$ dB. 
\begin{figure} [ht]
    \centering
    \includegraphics[scale=0.35]{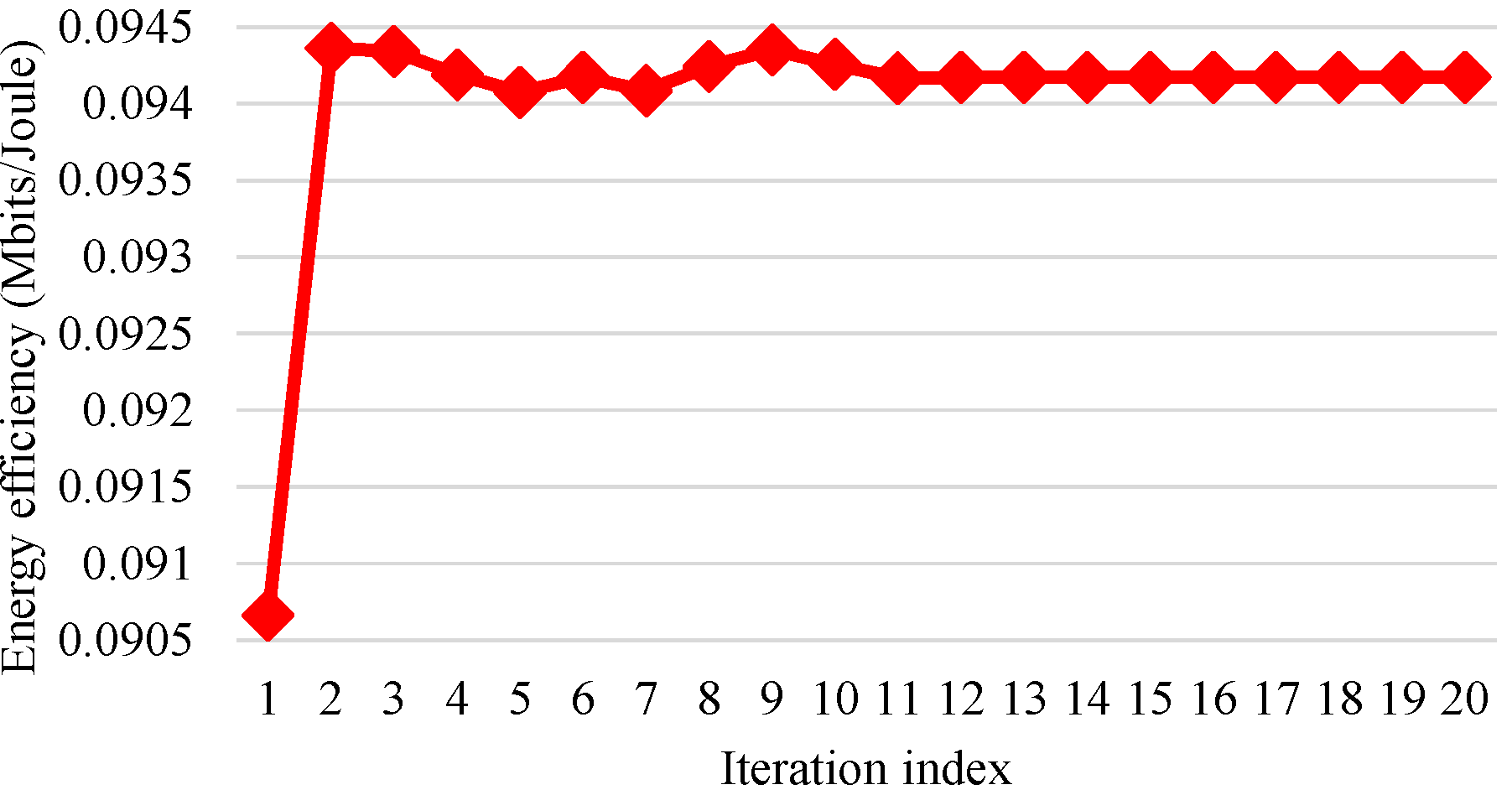}
    \caption{Sự hội tụ của thuật toán UAPCEE.}
    \label{Fig:Convergence}
\end{figure}

Đầu tiên, tác giả chỉ ra sự hội tụ của thuật toán được đề xuất (User Association and Power Control for Energy Efficiency - UAPCEE). Theo quan sát từ Hình \ref{Fig:Convergence}, thuật toán được đề xuất có thể hội tụ với mười một lần lặp lại, khi đó giải pháp là tối ưu. Hiệu suất sử dụng mạng cũng gia tăng đáng kể, đem lại kết quả tối ưu hơn. Tiếp theo, nhóm tác giả so sánh các mô hình quản lý nhiễu TOIM khác nhau về hiệu suất công suất với các thông số tiêu thụ điện trên một đơn vị dữ liệu như trong Bảng \ref{tab:compare}.

\begin{table}[t]
    \centering
    \caption{So sánh hiệu suất công suất của một số mô hình hiện có}
    \begin{tabular}{|m{0.15\columnwidth}|m{0.14\columnwidth}|m{0.18\columnwidth}|m{0.16\columnwidth}|m{0.12\columnwidth}|}
        \hline
        Mức điện năng tiêu thụ  (W/Mbps)
        &UAPCEE  (Mbps/ Joule)
        &JUAPCMSER (Mbps/Joule)
        &UAPCEEwB (Mbps /Joule)
        &RE (Mbps/Joule) \\
        \hline
        \centering 8    &0.116    &0.1114    &0.1111    &0.113  \\
        \hline
        \centering 10   &0.0941   &0.0911    &0.0909    &0.0921  \\
        \hline
        \centering 12   &0.0789   &0.0771    &0.0769    &0.0778  \\
        \hline
        \centering 14   &0.0679   &0.0668    &0.0667    &0.0673  \\
        \hline
        \centering 16   &0.0599   &0.0589    &0.0588    &0.0593\\
        \hline    
        \centering 18   &0.0536   &0.0527    &0.0526    &0.0531\\
        \hline
        \centering 20   &0.0484   &0.0477    &0.0476    &0.0486\\
        \hline
    \end{tabular}
    \label{tab:compare}
\end{table}

Các mô hình được so sánh bao gồm: (1) Mở rộng phạm vi (Range Expansion - RE); (2) Người dùng kết hợp với BS có độ lợi kênh lớn nhất; (3) Không tính đến công suất tiêu thụ của các liên kết backhaul (UAPCEEw) và (4) Không xem xét công suất tiêu thụ của các liên kết backhaul (JUAPCMSER).

Hình \ref{Fig:Comparison} cho thấy hiệu suất năng lượng và mức chênh lệch về hiệu suất giữa các thuật toán giảm khi tham số động  tăng lên. Điều đó là hợp lý bởi vì tiêu thụ điện năng trong các liên kết backhaul trở nên chiếm ưu thế so với tiêu thụ điện năng trong mạng truy nhập. Ngoài ra, thuật toán được đề xuất vượt trội hơn so với các khuôn khổ được so sánh vì công suất tiêu thụ của các liên kết hỗ trợ được tính đến hàm mục tiêu của hiệu quả năng lượng.
\begin{figure} [ht]
    \centering
    \includegraphics[scale=0.35]{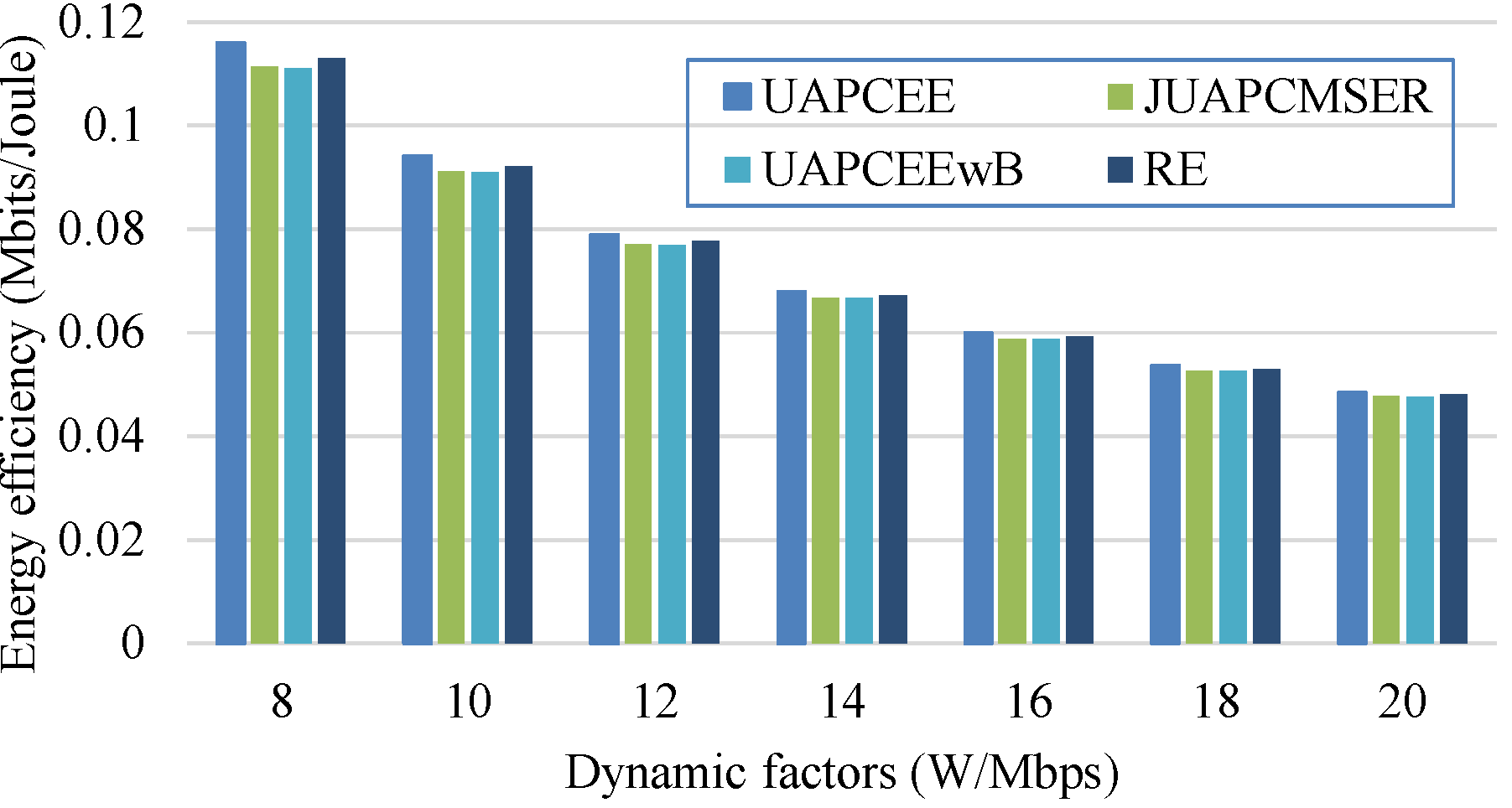}
    \caption{Tương quan hiệu suất năng lượng của các mô hình hiện nay.}
    \label{Fig:Comparison}
\end{figure}

\section{Tổng kết}
Tối ưu công suất tiêu thụ điện trong mạng không đồng nhất là một vấn đề rất quan trọng trong mạng truyền thông, viễn thông ngày nay. Việc giảm thiểu tối đa công suất cần thiết của các thiết bị phát cũng như thiết bị sử dụng mạng sẽ mang lại nhiều kết quả, đổi mới trong tương lai và đặc biệt hữu ích đối với các mạng IoT, mạng viễn thông nội đô. Trong nghiên cứu, nhóm tác giả đã triển khai một mô hình liên kết người dùng và điều khiển công suất để tối ưu năng lượng trong không dây không đồng nhất mật độ cao. Vì vấn đề ban đầu là rất khó giải quyết, nhóm tác giả đã đề xuất một thuật toán lặp với đảm bảo sự hội tụ. Từ kết quả mô phỏng được cung cấp, nhóm tác giả đã đưa ra điểm tiến bộ của thuật toán đề xuất so với các phương pháp hiện có. Như đã nói ở trên, công trình này được thiết kế cho các mạng đơn ăng ten, việc mở rộng cho mạng nhiều ăng ten là rất cần thiết. Ngoài ra, công việc trong tương lai nên tính đến các vấn đề liên quan đến những hành vi không lý tưởng.

\appendix
Gọi $\boldsymbol{\lambda}$ vectơ đối ngẫu liên kết với ràng buộc cuối cùng trong~\eqref{Eq:UEAssociation_Aux_Problem}, hàm Lagrangian được định nghĩa là 
\begin{align*}
L(\boldsymbol{x},\boldsymbol{t},\boldsymbol{\lambda}) = 
&
\sum\limits_{k \in \mathcal{K}}\sum\limits_{n \in \mathcal{N}}\left( 1 - q\xi_{k} \right) x_{nk}t_{nk} \\
&
+ \sum\limits_{k \in \mathcal{K}}\sum\limits_{n \in \mathcal{N}}\lambda_{nk}\left( r_{nk} - y_{k}t_{nk} \right).
\end{align*}

Vì $\boldsymbol{\lambda}$ là vectơ đối ngẫu và $(\boldsymbol{x}^{*},\boldsymbol{t}^{*})$ là nghiệm tối ưu cho bài toán~\eqref{Eq:UEAssociation_Sub_Problem}, các phương trình sau có thể được tương đương với: 
\begin{align*}
&
\partial L/\partial t_{nk} = \left( 1 - q \xi_{k} \right)x_{nk}^{*} - \lambda_{nk}^{*}y_{k}^{*} = 0,\\
&
\lambda_{nk}^{*}\left( r_{nk} - y_{k}^{*}t_{nk}^{*} \right) = 0.
\end{align*}

Hệ phương trình trên tương đương với $\lambda_{nk}^{*} = \left( 1 - q \xi_{k} \right)x_{nk}^{*}/y_{k}^{*}$ và $t_{nk}^{*} = r_{nk}/y_{k}^{*}$. Ngoài ra, đối với $\boldsymbol{\lambda}=\boldsymbol{\lambda}^{*}$ và $\boldsymbol{t}=\boldsymbol{t}^{*}$, bài toán sau
\begin{align*} \label{Eq:UEAssociation_Sub_Problem_Appendix}
& \max\sum\limits_{k \in \mathcal{K}}\sum\limits_{n \in \mathcal{N}}\left( 1 - q\xi_{k} \right) x_{nk}t_{nk} + \sum\limits_{k \in \mathcal{K}}\sum\limits_{n \in \mathcal{N}}\lambda_{nk}\left( r_{nk} - y_{k}t_{nk} \right)
\\
& \text{s.t.} 
\quad (\text{C1}), (\text{C2}), (\text{C3})
\end{align*}
có thể được đơn giản hóa thành vấn đề~\eqref{Eq:UEAssociation_Sub_Problem} and và tất cả chúng đều có cùng một tập hợp các điều kiện tối ưu KKT. Điều này hoàn thành bằng chứng về câu lệnh đầu tiên của Định lý~\ref{Theo:UEAssociation_Sub_Problem}. Khẳng định thứ hai có thể được chứng minh tương tự. Phần chứng minh kết thúc.

\bibliographystyle{IEEEtran}
\nocite{*}
\def\refname{Tài liệu tham khảo}
\bibliography{reference}

\begin{thebibliography}{10}
\providecommand{\url}[1]{#1}
\csname url@samestyle\endcsname
\providecommand{\newblock}{\relax}
\providecommand{\bibinfo}[2]{#2}
\providecommand{\BIBentrySTDinterwordspacing}{\spaceskip=0pt\relax}
\providecommand{\BIBentryALTinterwordstretchfactor}{4}
\providecommand{\BIBentryALTinterwordspacing}{\spaceskip=\fontdimen2\font plus
\BIBentryALTinterwordstretchfactor\fontdimen3\font minus
  \fontdimen4\font\relax}
\providecommand{\BIBforeignlanguage}[2]{{%
\expandafter\ifx\csname l@#1\endcsname\relax
\typeout{** WARNING: IEEEtran.bst: No hyphenation pattern has been}%
\typeout{** loaded for the language `#1'. Using the pattern for}%
\typeout{** the default language instead.}%
\else
\language=\csname l@#1\endcsname
\fi
#2}}
\providecommand{\BIBdecl}{\relax}
\BIBdecl

\bibitem{Huq2014}
K.~M.~S. Huq, S.~Mumtaz, J.~Bachmatiuk, J.~Rodriguez, X.~Wang, and R.~L.
  Aguiar, ``{Green HetNet CoMP: Energy Efficiency Analysis and Optimization},''
  \emph{IEEE Transactions on Vehicular Technology}, vol.~64, no.~10, pp.
  4670--4683, 2015.

\bibitem{Kamel2016}
M.~Kamel, W.~Hamouda, and A.~Youssef, ``{Ultra-Dense Networks: A Survey},''
  \emph{IEEE Communications Surveys Tutorials}, vol.~18, no.~4, pp. 2522--2545,
  2016.

\bibitem{Thanh2021}
T.~H. Nguyen, W.-S. Jung, L.~T. Tu, T.~V. Chien, D.~Yoo, and S.~Ro,
  ``{Performance Analysis and Optimization of the Coverage Probability in Dual
  Hop LoRa Networks With Different Fading Channels},'' \emph{IEEE Access},
  vol.~8, pp. 107\,087--107\,102, 2020.

\bibitem{Nguyen2021}
T.~H. Nguyen, T.~V. Chien, H.~Q. Ngo, X.~N. Tran, and E.~Björnson, ``{Pilot
  Assignment for Joint Uplink-Downlink Spectral Efficiency Enhancement in
  Massive MIMO Systems with Spatial Correlation},'' \emph{IEEE Transactions on
  Vehicular Technology}, vol.~70, no.~8, pp. 8292--8297, 2021.

\bibitem{Tam2017}
H.~H.~M. Tam, H.~D. Tuan, D.~T. Ngo, T.~Q. Duong, and H.~V. Poor, ``{Joint Load
  Balancing and Interference Management for Small-Cell Heterogeneous Networks
  With Limited Backhaul Capacity},'' \emph{IEEE Transactions on Wireless
  Communications}, vol.~16, no.~2, pp. 872--884, 2017.

\bibitem{Sun2015}
R.~Sun, M.~Hong, and Z.-Q. Luo, ``{Joint Downlink Base Station Association and
  Power Control for Max-Min Fairness: Computation and Complexity},'' \emph{IEEE
  Journal on Selected Areas in Communications}, vol.~33, no.~6, pp. 1040--1054,
  2015.

\bibitem{Quan2019}
S.~Park, A.~Q. Truong, and T.~H. Nguyen, ``{Power Control for Sum Spectral
  Efficiency Optimization in MIMO-NOMA Systems With Linear Beamforming},''
  \emph{IEEE Access}, vol.~7, pp. 10\,593--10\,605, 2019.

\bibitem{Agapi2014}
A.~Mesodiakaki, F.~Adelantado, L.~Alonso, and C.~Verikoukis,
  ``{Energy-efficient User Association in Cognitive Heterogeneous Networks},''
  \emph{IEEE Communications Magazine}, vol.~52, no.~7, pp. 22--29, 2014.

\bibitem{Wang2017}
M.~Wang, H.~Gao, and T.~Lv, ``{Energy-efficient User Association and Power
  Control in The Heterogeneous Network},'' \emph{IEEE Access}, vol.~5, pp.
  5059--5068, 2017.

\bibitem{QVP2015}
Q.-V. Pham, H.-L. To, and W.-J. Hwang, ``{A Multi-timescale Cross-layer
  Approach for Wireless Ad hoc Networks},'' \emph{Computer Networks}, vol.~91,
  pp. 471--482, 2015.

\bibitem{isheden2010}
C.~Isheden and G.~P. Fettweis, ``{Energy-efficient Multi-carrier Link
  Adaptation with Sum Rate-dependent Circuit Power},'' in \emph{{2010 IEEE
  Global Telecommunications Conference GLOBECOM 2010}}.\hskip 1em plus 0.5em
  minus 0.4em\relax IEEE, 2010, pp. 1--6.

\bibitem{Luo2008}
Z.-Q. Luo and S.~Zhang, ``{Dynamic Spectrum Management: Complexity and
  Duality},'' \emph{IEEE Journal of Selected Topics in Signal Processing},
  vol.~2, no.~1, pp. 57--73, 2008.

\bibitem{John2009}
J.~Papandriopoulos and J.~S. Evans, ``{SCALE: A Low-Complexity Distributed
  Protocol for Spectrum Balancing in Multiuser DSL Networks},'' \emph{IEEE
  Transactions on Information Theory}, vol.~55, no.~8, pp. 3711--3724, 2009.

\bibitem{Cam2004}
S.~Boyd and L.~Vandenberghe, \emph{Convex functions}.\hskip 1em plus 0.5em
  minus 0.4em\relax Cambridge University Press, 2004, p. 67–126.

\bibitem{Book1967}
W.~Dinkelbach, ``{On Nonlinear Fractional Programming},'' \emph{Management
  Science}, vol.~13, no.~7, pp. 492--498, 1967.

\bibitem{He2014}
S.~He, Y.~Huang, H.~Wang, S.~Jin, and L.~Yang, ``{Leakage-Aware
  Energy-Efficient Beamforming for Heterogeneous Multicell Multiuser
  Systems},'' \emph{IEEE Journal on Selected Areas in Communications}, vol.~32,
  no.~6, pp. 1268--1281, 2014.

\bibitem{QVPFairness}
Q.-V. Pham and W.-J. Hwang, ``{Fairness-aware Spectral and Energy Efficiency in
  Spectrum-sharing Wireless Networks},'' \emph{IEEE transactions on vehicular
  technology}, vol.~66, no.~11, pp. 10\,207--10\,219, 2017.

\end{thebibliography}

\vspace{12pt}

\end{document}